\documentclass[12pt]{amsart}
\usepackage{amssymb,amsmath,mathrsfs,enumerate,mathtools}
\usepackage{tikz-cd}
\usepackage[pdftex,bookmarks=true]{hyperref}

\textwidth = 420pt
\hoffset = -30pt
\theoremstyle{plain}

\newtheorem{theorem}{Theorem}[section]

\newtheorem{proposition}[theorem]{Proposition}
\newtheorem{lemma}[theorem]{Lemma}

\theoremstyle{definition}

\numberwithin{equation}{section}

\DeclareRobustCommand{\rchi}{{\mathpalette\irchi\relax}}
\newcommand{\irchi}[2]{\raisebox{\depth}{$#1\chi$}}

\author[M.~Tantrawan]{Made Tantrawan}
\address{\textit{a} Department of Mathematics,   National University of Singapore, Singapore 119076 \newline
	\textit{b} Department of Mathematics, Faculty of Mathematics and Natural Sciences, Universitas Gadjah Mada, Indonesia 55281}
\email{made.tantrawan@ugm.ac.id}

\author[D.~Leung]{Denny H.~Leung}
\address{Department of Mathematics, National University of Singapore, Singapore 119076}
\email{matlhh@nus.edu.sg}

\title[On closedness of law-invariant convex sets]{On closedness of law-invariant convex sets in rearrangement invariant spaces}

\thanks{The first author is supported by NUS Research Scholarship. The second author is partially supported by AcRF grant R-146-000-242-114.}
\keywords{convex sets, law-invariant, rearrangement invariant spaces, Fatou property}

\subjclass[2010]{46A55, 46E30, 46A20}

\date{\today}

\begin{document}

\begin{abstract}
This paper presents relations between several types of closedness of a law-invariant convex set in a rearrangement invariant space $\mathcal{X}$. In particular, we show that order closedness, $\sigma(\mathcal{X},\mathcal{X}_n^\sim)$-closedness and $\sigma(\mathcal{X},L^\infty)$-closedness of a law-invariant convex set in $\mathcal{X}$ are equivalent, where $\mathcal{X}_n^\sim$ is the order continuous dual of $\mathcal{X}$. We also provide some application to proper quasiconvex law-invariant functionals with the Fatou property.
\end{abstract}

\maketitle

\section{Introduction}

\subsection{Background}
An important problem arising from the theory of risk measures asks whether order closedness of a convex set in a Banach function space $\mathcal{X}$ ensures closedness with respect to the $\sigma(\mathcal{X},\mathcal{X}_n^\sim)$-topology (\cite{Ow}). A positive answer to this problem will guarantee the admittance of Fenchel-Moreau dual representation of a proper convex functional (in particular, a coherent risk measure) with the Fatou property. It is known that the problem has a positive answer for $L^p$ spaces (see \cite{De} for $\mathcal{X}=L^\infty$). However, it may have a negative answer in general Orlicz spaces (\cite{GLX}). In 2018, Gao et al. \cite{GLMX} showed that the problem still has a positive answer for any Orlicz space when the convex set is assumed to be law-invariant. One of the main results of the present paper extends this to general rearrangement invariant (r.i.) spaces, the largest class of Banach function spaces where the law-invariant condition is applicable.

We organize our paper as follows. In Section 2, we show that for an r.i. space $\mathcal{X}$, order closedness, $\sigma(\mathcal{X},\mathcal{X}_n^\sim)$-closedness and $\sigma(\mathcal{X},L^\infty)$-closedness of law-invariant convex sets in $\mathcal{X}$ are equivalent. We also characterize when order closedness and norm closedness of a law-invariant convex set are equivalent. In Section 3, we apply the above results to obtain relations between the Fatou property and other types of lower semicontinuity, namely $\sigma(\mathcal{X},\mathcal{X}_n^\sim)$-lower semicontinuity, $\sigma(\mathcal{X},\mathcal{X}_{uo}^\sim)$-lower semicontinuity $\sigma(\mathcal{X},L^\infty)$-lower semicontinuity, norm lower semicontinuity, and the strong Fatou property.

\subsection{Banach function spaces and rearrangement invariant spaces}

Throughout this paper, we always work on a nonatomic probability space $(\Omega,\Sigma,\mu)$. Let $L^0=L^0(\Omega,\Sigma,\mu)$ be the vector lattice of all (equivalence classes with respect to equality a.e. of) real measurable functions on $\Omega$. A Banach function space $\mathcal{X}$ over $\Omega$ is an ideal of $L^0$ endowed with a complete norm such that $\|f\|_{\mathcal{X}}\leq \|g\|_{\mathcal{X}}$ whenever $f,g\in \mathcal{X}$ and $|f|\leq |g|$. 

A sequence $\{f_n\}$ in $\mathcal{X}$ is said to order converge to $f\in\mathcal{X}$, written as $f_n\xrightarrow{o}f$, if $f_n\xrightarrow{a.e.}f$ and $|f_n|\leq h$ for some $h\in\mathcal{X}$. The order continuous dual $\mathcal{X}_n^\sim$ (resp., unbounded order continuous dual $\mathcal{X}_{uo}^\sim$) of $\mathcal{X}$ is the collection of all linear functionals $\varphi$ on $\mathcal{X}$ such that $\varphi(f_n)\to0$ whenever $f_n\xrightarrow{o}0$ (resp., $f_n\xrightarrow{a.e.}0$ and $\{f_n\}$ is norm bounded). Both $\mathcal{X}_n^\sim$ and $\mathcal{X}_{uo}^\sim$ are ideals in the norm dual $\mathcal{X}^*$ of $\mathcal{X}$. Denote by $\mathcal{X}_a$ the order continuous part of $\mathcal{X}$, i.e., the set of all $f\in\mathcal{X}$ such that $\|f\rchi_{A_n}\|_{\mathcal{X}}\to0$ whenever $A_n\downarrow\emptyset$. For any Banach function space $\mathcal{X}$, $\mathcal{X}_{uo}^\sim$ is the order continuous part of $\mathcal{X}_n^\sim$ (\cite[Theorem 2.3.]{GLX1}). A Banach function space $\mathcal{X}$ is order continuous if $\mathcal{X}_a=\mathcal{X}$, or equivalently, $\mathcal{X}_n^\sim=\mathcal{X}^*$ (\cite[Theorem 2.4.2]{MN}). The order continuous dual $\mathcal{X}_n^\sim$ is isometrically isomorphic to the associated space $\mathcal{X}'$ of $\mathcal{X}$, i.e., the space of all $g\in L^0$ such that the associate norm 
\[
\|g\|_{\mathcal{X}'}=\sup\left\{\int_{\Omega}fg\mathrm{d}\mu:\|f\|_{\mathcal{X}}\leq 1\right\}
\] 
is finite (\cite[Theorem 2.6.4]{MN}). We will identifty $\mathcal{X}_n^\sim$ with the Banach function space $\mathcal{X}'$.

A rearrangement invariant (r.i.) space $\mathcal{X}$ is a Banach function space over $\Omega$ such that $\mathcal{X}\neq\{0\}$ and for every $g\in L^0$, $g\in \mathcal{X}$ and $\|g\|_{\mathcal{X}}=\|f\|_{\mathcal{X}}$ whenever $g$ has the same distribution as some $f\in \mathcal{X}$. Orlicz spaces, Lorentz spaces, Marcinkiewicz spaces, and Orlicz-Lorentz spaces are some examples of r.i. spaces. For any r.i. space $\mathcal{X}$, $L^\infty\subseteq \mathcal{X}\subseteq L^1$ (\cite[Corollary 6.7, p. 78]{BS}). Since $\mathcal{X}_n^\sim$ is an r.i. space (\cite[Proposition 4.2, p. 59]{BS}), we also have $L^\infty \subseteq \mathcal{X}_n^\sim \subseteq L^1$. A set $E$ in an r.i. space $\mathcal{X}$ is said to be law-invariant if $g\in E$ for every $g\in \mathcal{X}$ that has the same distribution as some $f\in E$. 

We refer to \cite{BS, KPS, MN}\footnote{In \cite{BS}, all Banach function spaces $\mathcal{X}$ are assumed to satisfy the following condition: $f\in \mathcal{X}$ and $\|f\|_{\mathcal{X}}=\sup_n\|f_n\|_{\mathcal{X}}$ whenever $\{f_n\}$ is an increasing norm bounded sequence in $\mathcal{X}_+$ and $f$ is the pointwise limit of $\{f_n\}$. The results we cite from \cite{BS} remain true without assuming this extra condition.} for more details on Banach function spaces and rearrangement invariant spaces.

\section{Closedness of law-invariant convex sets}

\textit{For the rest of the paper, we always assume that $\mathcal{X}$ is an r.i. space}. Recall that $E \subseteq \mathcal{X}$ is order closed if its order closure $$\overline{E}^o=\{f\in\mathcal{X}:\exists \{f_n\} \subseteq E\ \text{s.t.}\ f_n\xrightarrow{o}f\}$$ is equal to $E$ itself. Observe that for any $E \subseteq \mathcal{X}$, $$\overline{E}^{\|\cdot\|_{\mathcal{X}}}\subseteq \overline{E}^o \subseteq \overline{E}^{\sigma(\mathcal{X},\mathcal{X}_n^\sim)} \subseteq \overline{E}^{\sigma(\mathcal{X},L^\infty)}.$$ Hence, every $\sigma(\mathcal{X},L^\infty)$-closed set is $\sigma(\mathcal{X},\mathcal{X}_n^\sim)$-closed, every  $\sigma(\mathcal{X},\mathcal{X}_n^\sim)$-closed set is order closed, and every order closed set is norm closed.

We will write $\pi$ to denote a finite measurable partition of $\Omega$ whose members have non-zero measures. Denote by $\sigma(\pi)$ the finite $\sigma$-subalgebra generated by $\pi$. For any $f\in\mathcal{X}$,
\[
\mathbb{E}[f|\pi]:=\mathbb{E}[f|\sigma(\pi)]=\sum_{i=1}^n\frac{\int_{\Omega}f\rchi_{A_i}\mathrm{d}\mu}{\mu(A_i)}\rchi_{A_i}\in L^\infty,
\]
where $\pi=\{A_1,A_2,\dotsc,A_n\}$. The collection $\Pi$ of all such $\pi$'s is directed by refinement. When $\mathcal{X}=L^\infty$, it is known that $\mathbb{E}[f|\pi]\xrightarrow{\|\cdot\|_{\infty}}f$ (see \cite{JST} and \cite{Sv}), in particular, any $f\in L^\infty$ is an order limit of a sequence of $\{\mathbb{E}[f|\pi]\}$. The last property is also true for Orlicz spaces, that is, any element $f$ in an Orlicz space $\mathcal{X}$ is an order limit of a sequence of $\{\mathbb{E}[f|\pi]\}$ (\cite{GLMX}). Until now, it is unknown whether the same property holds in general r.i. spaces (see \cite{CGX}). However, the following weaker property does hold for any r.i. space.

\begin{proposition}\label{ri-orderdualcnvseq}
	For every $f\in \mathcal{X}$, there exist sequences $\{\pi_n\} \subseteq \Pi$ and $\{f_n\} \subseteq\mathcal{X}$ such that $0\leq |f_n|\leq |f|$ for every $n$, $\mathbb{E}[f-f_n|\pi_n]\xrightarrow{o}f$ and $\mathbb{E}[f_n|\pi_n]\xrightarrow{\sigma(\mathcal{X},\mathcal{X}_n^\sim)}0$.
\end{proposition}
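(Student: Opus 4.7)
The plan is a truncation argument. I choose $K_n \uparrow \infty$, set $E_n = \{|f| > K_n\}$ (so $\mu(E_n) \to 0$), and define $f_n = f\chi_{E_n}$; then $|f_n| \le |f|$ automatically, and $f - f_n = f\chi_{E_n^c}$ is bounded by $K_n$. Applying the cited $L^\infty$-approximation result from \cite{JST,Sv} to $f - f_n \in L^\infty$, I can select $\pi_n \in \Pi$ satisfying $E_n \in \sigma(\pi_n)$ and $\|\mathbb{E}[f - f_n | \pi_n] - (f - f_n)\|_\infty < 1/n$.

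The order convergence $\mathbb{E}[f - f_n | \pi_n] \xrightarrow{o} f$ then follows directly: the uniform estimate above combined with $f - f_n \to f$ a.e.\ (valid because $E_n \downarrow \{|f|=\infty\}$ is null) gives a.e.\ convergence of $\mathbb{E}[f - f_n | \pi_n]$ to $f$, and the sequence is dominated by $|f| + 1 \in \mathcal{X}$.

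The main work is the weak convergence $\mathbb{E}[f_n | \pi_n] \xrightarrow{\sigma(\mathcal{X},\mathcal{X}_n^\sim)} 0$. Using the identification $\mathcal{X}_n^\sim \cong \mathcal{X}'$ and the duality $\int \mathbb{E}[f_n | \pi_n]\, g\, \mathrm{d}\mu = \int f_n\, \mathbb{E}[g | \pi_n]\, \mathrm{d}\mu$, this reduces to proving $\int f_n\, \mathbb{E}[g | \pi_n]\, \mathrm{d}\mu \to 0$ for each $g \in \mathcal{X}'$. I would additionally require $\pi_n$ to refine inside $E_n$ the dyadic level sets $\{|f| \in [2^j, 2^{j+1})\}$ for a wide range of $j$: on each such atom $A \subseteq E_n$ one has $|\mathbb{E}[f_n | \pi_n]| \le 2|f|$ pointwise on $A$. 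Thus, outside a residual atom of very small measure, $|\mathbb{E}[f_n | \pi_n]|$ is dominated by $2|f|\chi_{E_n}$. Since $E_n$ decreases to a null set and $2|f| \in \mathcal{X}$, the dominator $\xrightarrow{o} 0$; because every functional in $\mathcal{X}_n^\sim$ is by definition order continuous, any order-null sequence is automatically $\sigma(\mathcal{X},\mathcal{X}_n^\sim)$-null, so the bulk contribution of $\mathbb{E}[f_n | \pi_n]$ vanishes in this topology.

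The main technical obstacle is the residual atom $R$ left over from using only finitely many level sets; on $R$, $\mathbb{E}[f_n | \pi_n]$ is a constant not pointwise bounded by $|f|$. I expect to handle it by arranging $\mu(R)$ to shrink extremely fast and estimating $\bigl|\int \mathbb{E}[f_n | \pi_n] \chi_R\, g\, \mathrm{d}\mu\bigr|$ directly via Hölder's inequality in $\mathcal{X}$ and $\mathcal{X}'$ together with the fundamental-function identity $\phi_{\mathcal{X}}(t)\phi_{\mathcal{X}'}(t) = t$ valid on every r.i.\ space, so that this residual pairing vanishes in the limit uniformly in $g$ on bounded subsets of $\mathcal{X}'$.
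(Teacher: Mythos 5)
There is a genuine gap at exactly the point you flag as the ``main technical obstacle'': the residual atom $R=R_n$ cannot be handled by H\"older plus the fundamental-function identity. Writing $c_R=\frac{1}{\mu(R)}\int_R f$, the residual pairing is $\bigl|c_R\int_R g\,\mathrm{d}\mu\bigr|\leq \frac{1}{\mu(R)}\|f\rchi_R\|_{1}\,\|g\rchi_R\|_{1}$, and the sharpest H\"older estimates available give $\|f\rchi_R\|_1\leq\|f\rchi_R\|_{\mathcal{X}}\|\rchi_R\|_{\mathcal{X}'}$ and $\|g\rchi_R\|_1\leq\|g\|_{\mathcal{X}'}\|\rchi_R\|_{\mathcal{X}}$. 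The identity $\|\rchi_R\|_{\mathcal{X}}\|\rchi_R\|_{\mathcal{X}'}=\mu(R)$ then cancels the factor $1/\mu(R)$ \emph{exactly}, leaving the bound $\|f\rchi_{R_n}\|_{\mathcal{X}}\|g\|_{\mathcal{X}'}$ (or, distributing differently, $\|f\|_{\mathcal{X}}\|g\rchi_{R_n}\|_{\mathcal{X}'}$), with no decay left over from shrinking $\mu(R_n)$. Since $R_n$ is a tail set $\{|f|\geq 2^{J_n}\}$, the quantity $\|f\rchi_{R_n}\|_{\mathcal{X}}$ tends to $0$ only when $f$ behaves like an element of $\mathcal{X}_a$; for a non--order-continuous $\mathcal{X}$ (e.g.\ a Marcinkiewicz space) and $f$ with $f\notin\mathcal{X}_a$, it stays bounded away from $0$, and likewise $\|g\rchi_{R_n}\|_{\mathcal{X}'}$ need not vanish when $g\notin\mathcal{X}_{uo}^\sim$. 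So the residual term is not controlled, and making $\mu(R_n)$ shrink ``extremely fast'' buys nothing. The rest of your argument (the choice $f_n=f\rchi_{\{|f|>K_n\}}$, the $L^\infty$-approximation of $f-f_n$, the order convergence, and the pointwise bound $|\mathbb{E}[f_n|\pi_n]|\leq 2|f|$ on dyadic atoms) is correct and the first half coincides with the paper's proof.

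The missing idea is to avoid any pointwise domination of $\mathbb{E}[f_n|\pi_n]$ altogether and instead use the majorization property of conditional expectation: $\int_0^a(\mathbb{E}[f_n|\pi_n])^*(t)\,\mathrm{d}t\leq\int_0^a f_n^*(t)\,\mathrm{d}t$ for all $a$ (\cite[Proposition 3.7, p.~57]{BS}). Combined with Hardy's Lemma and the Hardy--Littlewood inequality, this yields $\bigl|\int_\Omega\mathbb{E}[f_n|\pi_n]\,g\,\mathrm{d}\mu\bigr|\leq\int_0^1 f_n^*g^*\,\mathrm{d}t$, which tends to $0$ by dominated convergence since $f_n^*\to0$ a.e., $f_n^*\leq f^*$ and $\int_0^1 f^*g^*<\infty$. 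This is how the paper's proof disposes of the ``peaked'' part of $f_n$ that defeats your residual estimate; no refinement of $\pi_n$ by level sets is needed.
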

\begin{proof}
	Let $f\in \mathcal{X}$. Put $f_n=f\rchi_{\{|f|>n\}}$ for every $n$. Observe that $f-f_n\in L^\infty$ for every $n$. Then we can find $\pi_n\in\Pi$ such that
	\[
	\left\|\mathbb{E}[f-f_n|\pi_n]-(f-f_n)\right\|_\infty\leq\frac{1}{n}
	\]
	for every $n$. It follows that $\mathbb{E}[f-f_n|\pi_n]-(f-f_n)\xrightarrow{o}0$. Since $f_n\xrightarrow{o}0$, we obtain that $\mathbb{E}[f-f_n|\pi_n]\xrightarrow{o}f$.
	
	For any $h\in L^0$, denote by $h^*:[0,\infty)\to\mathbb{R}$ the decreasing rearrangement of $h$, that is,
	\[
	h^*(t)=\inf\{\lambda:\mu(\{x\in\Omega:|h(x)|>\lambda\})\leq t\},\ t\geq 0.
	\]
	By \cite[Proposition 3.7, p. 57]{BS}, 
	\[
	\int_{0}^a\left(\mathbb{E}[f_n|\pi_n]\right)^*(t)\mathrm{d}t\leq \int_{0}^a\left(f_n\right)^*(t)\mathrm{d}t
	\]
	for any $a\in[0,1]$. Let $g\in\mathcal{X}_n^\sim$. From Hardy's Lemma (\cite[Proposition 3.6, p. 56]{BS}), we have that
	\[
	\int_{0}^{1}\left(\mathbb{E}[f_n|\pi_n]\right)^*(t)g^*(t)\mathrm{d}t\leq \int_{0}^{1}\left(f_n\right)^*(t)g^*(t)\mathrm{d}t.
	\]
	Since $f_n\xrightarrow{a.e.}0$ and $|f_n|\leq |f|$, we see that $f_n^*\xrightarrow{a.e.}0$ and $0\leq f_n^*\leq f^*$ (see properties $2^\circ$ and $12^\circ$ in \cite[pp. 63-67]{KPS}). Since $\mathcal{X}$ is an r.i. space, $\int_0^{1}f^*(t)g^*(t)\mathrm{d}t<\infty$. By Hardy-Littlewood inequality (\cite[Proposition 2.2, p. 44]{BS}) and  the dominated convergence theorem, we obtain that
	\begin{eqnarray*}
		\left|\int_{\Omega}\mathbb{E}[f_n|\pi_n]g\mathrm{d}\mu\right|&\leq&\int_{0}^{1}\left(\mathbb{E}[f_n|\pi_n]\right)^*(t)g^*(t)\mathrm{d}t\\
		&\leq&\int_{0}^{1}\left(f_n\right)^*(t)g^*(t)\mathrm{d}t\to0.
	\end{eqnarray*}
	Thus, we conclude that $\mathbb{E}[f_n|\pi_n]\xrightarrow{\sigma(\mathcal{X},\mathcal{X}_n^\sim)}0$.
\end{proof}

Observe that the following properties hold for any r.i. space $\mathcal{X}$.

\begin{lemma}\label{Linfty-o.c.part} ${}$
	\begin{enumerate}[$(1)$]
		\item Either $\mathcal{X}=L^\infty$ or $L^\infty \subseteq\mathcal{X}_a$.
		\item Either $\mathcal{X}=L^1$ or $L^\infty \subseteq \mathcal{X}^\sim_{uo}$.
	\end{enumerate}
\end{lemma}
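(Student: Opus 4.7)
My plan is to build both parts around the fundamental function $\varphi_\mathcal{X}(t) := \|\rchi_A\|_\mathcal{X}$ for $A$ with $\mu(A) = t$; this is well-defined and nondecreasing by rearrangement invariance.

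For part (1), I would split on $\varphi_\mathcal{X}(0^+)$. If $\varphi_\mathcal{X}(0^+) = 0$, then for every $A_n \downarrow \emptyset$ we have $\|\rchi_{A_n}\|_\mathcal{X} = \varphi_\mathcal{X}(\mu(A_n)) \to 0$, so $\rchi_\Omega \in \mathcal{X}_a$, and the ideal property of $\mathcal{X}_a$ gives $L^\infty \subseteq \mathcal{X}_a$. If $\varphi_\mathcal{X}(0^+) = c > 0$, then for any $f \in \mathcal{X}$ and any $\lambda > 0$ with $\mu\{|f| > \lambda\} > 0$, the bound $|f| \geq \lambda \rchi_{\{|f|>\lambda\}}$ yields $\|f\|_\mathcal{X} \geq \lambda c$; hence $\|f\|_\infty \leq \|f\|_\mathcal{X}/c$, so $\mathcal{X} \subseteq L^\infty$, and combined with the always-true $L^\infty \subseteq \mathcal{X}$ we get $\mathcal{X} = L^\infty$.

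For part (2), I would apply part (1) to the r.i.\ space $\mathcal{X}' = \mathcal{X}_n^\sim$: either $L^\infty \subseteq (\mathcal{X}')_a = \mathcal{X}_{uo}^\sim$ (done), or $\mathcal{X}' = L^\infty$, in which case we must derive $\mathcal{X} = L^1$. In this remaining case $\mathcal{X}'' = (L^\infty)' = L^1$ with equivalent norms, and $\mathcal{X} \neq L^\infty$ (else $\mathcal{X}' = L^1$), so by part (1) applied to $\mathcal{X}$, $L^\infty \subseteq \mathcal{X}_a$. For any $A$ with $\mu(A) > 0$, Hahn-Banach on $\mathcal{X}$ supplies $\varphi \in \mathcal{X}^*$ with $\|\varphi\| \leq 1$ and $\varphi(\rchi_A) = \|\rchi_A\|_\mathcal{X}$; decomposing $\varphi = \varphi_n + \varphi_s$ into order continuous and singular parts and using that singular functionals vanish on $\mathcal{X}_a \ni \rchi_A$, we see that $\varphi_n \in \mathcal{X}_n^\sim = \mathcal{X}'$ alone attains the norm. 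This gives $\|\rchi_A\|_\mathcal{X} = \|\rchi_A\|_{\mathcal{X}''}$, and since $\mathcal{X}'' = L^1$ yields $\|\rchi_A\|_{\mathcal{X}''} \leq K\mu(A)$ for some constant $K$, we obtain $\varphi_\mathcal{X}(t) \leq K t$. For simple $f = \sum a_i \rchi_{A_i}$ (disjoint $A_i$), the triangle inequality then gives $\|f\|_\mathcal{X} \leq K\|f\|_1$; density of simple functions in $L^1$ together with completeness of $\mathcal{X}$ upgrades this to $L^1 \subseteq \mathcal{X}$ with equivalent norm, and combined with $\mathcal{X} \subseteq L^1$ we conclude $\mathcal{X} = L^1$.

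The main obstacle is the implication $\mathcal{X}' = L^\infty \Rightarrow \mathcal{X} = L^1$ in (2). Without the Fatou property, $\mathcal{X}$ need not coincide with $\mathcal{X}''$, so extracting an $L^1$-type bound on $\|\rchi_A\|_\mathcal{X}$ is not automatic; the crux is that part (1) forces $L^\infty \subseteq \mathcal{X}_a$ in the relevant regime, which is precisely what makes the Hahn-Banach / singular-part decomposition produce an order-continuous norm-attaining functional on every indicator, recovering the indicator-level identity $\varphi_\mathcal{X} = \varphi_{\mathcal{X}''}$.
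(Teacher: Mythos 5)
Your proof is correct, but the second half takes a genuinely different route from the paper's. For part (1) the two arguments coincide in substance: the dichotomy on $\lim_{\mu(A)\to 0}\|\rchi_A\|_{\mathcal{X}}$ is exactly what the paper invokes via \cite[Lemma A.2]{CGX}; you simply prove that cited lemma directly from the fundamental function, which is fine. For part (2) both proofs reduce, by applying the part-(1) dichotomy to $\mathcal{X}'=\mathcal{X}_n^\sim$, to the implication $\mathcal{X}'=L^\infty \Rightarrow \mathcal{X}=L^1$, but from there you diverge. The paper argues structurally: $(\mathcal{X}_a)^*=\mathcal{X}_n^\sim=L^\infty$ is an abstract $M$-space, so $\mathcal{X}_a$ is an abstract $L$-space, hence monotonically complete, hence perfect, giving $\mathcal{X}_a=((\mathcal{X}_a)_n^\sim)_n^\sim=L^1$ and then $L^1=\mathcal{X}_a\subseteq\mathcal{X}\subseteq L^1$. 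You instead compute norms: Hahn--Banach plus the band decomposition $\varphi=\varphi_n+\varphi_s$ shows the canonical embedding $\mathcal{X}\to\mathcal{X}''$ is isometric on indicators, whence $\|\rchi_A\|_{\mathcal{X}}\leq K\mu(A)$, and the triangle inequality on simple functions plus density and completeness yields $L^1\subseteq\mathcal{X}$. Your route is more elementary in that it avoids the $AL$/$AM$-duality and perfectness machinery, at the cost of one step you should justify or cite explicitly: that functionals disjoint from $\mathcal{X}_n^\sim$ annihilate $\mathcal{X}_a$ (this is standard --- see e.g. \cite{MN} --- and can also be checked directly here: for $0\leq\varphi\perp\mathcal{X}_n^\sim$, the restriction of $\varphi$ to $\mathcal{X}_a$ is represented by some $0\leq g\in\mathcal{X}'$ via \cite[Corollary 4.2, p.~23]{BS}, monotone convergence gives $\varphi\geq\varphi_g\geq 0$ on $\mathcal{X}_+$, and disjointness forces $g=0$). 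With that step supplied, both proofs are complete; the paper's is shorter given its cited toolkit, while yours is more self-contained and makes the quantitative content ($\varphi_{\mathcal{X}}(t)\leq Kt$) explicit.
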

\begin{proof}
	The first part is a direct consequence of \cite[Lemma A.2]{CGX}. In fact, if $L^\infty\not \subseteq\mathcal{X}_a$, then $\rchi_{\Omega}\notin \mathcal{X}_a$ or equivalently $\lim_{\mu(A)\to0}\|\rchi_A\|_{\mathcal{X}}>0$. By \cite[Lemma A.2]{CGX}, we conclude that $\mathcal{X}=L^\infty$. For the second part, suppose that $L^\infty\not \subseteq\mathcal{X}_{uo}^\sim$. We have that $\rchi_{\Omega}\notin \mathcal{X}_{uo}^\sim= (\mathcal{X}_n^\sim)_a$ and hence, $\lim_{\mu(A)\to0}\|\rchi_A\|_{\mathcal{X}_n^\sim}>0$. By \cite[Lemma A.2]{CGX}, $\mathcal{X}_n^\sim=L^\infty$ with equivalence of norms. Clearly, $\mathcal{X}\neq L^\infty$. By the first part, $L^\infty \subseteq \mathcal{X}_a$ and hence, $(\mathcal{X}_a)_n^\sim=(\mathcal{X}_a)^*=\mathcal{X}_n^\sim=L^\infty$ by (\cite[Corollary 4.2, p.~23]{BS}). Since $(\mathcal{X}_a)^*$ is isomorphic to the abstract $M$-space $L^\infty$, $\mathcal{X}_a$ is isomorphic to an abstract $L$-space by \cite[Proposition 1.4.7]{MN}. It follows from \cite[Proposition 8.2, p.~113]{Sc} that $\mathcal{X}_a$ is monotonically complete, i.e., $\sup_\alpha f_\alpha\in \mathcal{X}$ for every increasing norm bounded net $\{f_\alpha\}$ in $\mathcal{X}_+$. Hence,  $\mathcal{X}_a=((\mathcal{X}_a)_n^\sim)_n^\sim=L^1$ by \cite[Theorem 2.4.22]{MN}. Since $L^1=\mathcal{X}_a \subseteq \mathcal{X}  \subseteq L^1$, we conclude that $\mathcal{X}=L^1$.
\end{proof}

Using Proposition \ref{ri-orderdualcnvseq}, Lemma \ref{Linfty-o.c.part} and \cite[Proposition 2]{CGX}, we obtain the following generalization of \cite[Corollary 4.5]{GLMX} from the class of Orlicz spaces to all r.i. spaces. 

\begin{theorem}\label{invariant-0closed-cvx-prop}
	Let $C$ be a convex order closed law-invariant set in $\mathcal{X}$. Then $f\in C$ if and only if $\mathbb{E}[f|\pi]\in C$ for any $\pi\in \Pi$. 
\end{theorem}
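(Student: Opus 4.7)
The plan is to split into the two implications, with \cite[Proposition 2]{CGX} powering the forward direction and Proposition~\ref{ri-orderdualcnvseq} (aided by Lemma~\ref{Linfty-o.c.part}) powering the reverse.

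For the forward direction, fix $f\in C$ and $\pi\in\Pi$. The nonatomicity of $\Omega$ allows one to view $\mathbb{E}[f|\pi]$ as an order-limit of convex combinations of functions equimeasurable with $f$; this is essentially what \cite[Proposition 2]{CGX} supplies in the r.i.\ setting. Law-invariance of $C$ keeps each equimeasurable copy of $f$ in $C$; convexity keeps their combinations in $C$; and order closedness pulls the limit $\mathbb{E}[f|\pi]$ into $C$.

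For the reverse direction, assume $\mathbb{E}[f|\pi]\in C$ for every $\pi\in\Pi$. Applying Proposition~\ref{ri-orderdualcnvseq} produces $\{\pi_n\}\subseteq\Pi$ and $\{f_n\}\subseteq\mathcal{X}$ with $|f_n|\leq|f|$, $a_n:=\mathbb{E}[f-f_n|\pi_n]\xrightarrow{o}f$, and $b_n:=\mathbb{E}[f_n|\pi_n]\xrightarrow{\sigma(\mathcal{X},\mathcal{X}_n^\sim)}0$. By hypothesis, $h_n:=\mathbb{E}[f|\pi_n]=a_n+b_n\in C$ for every $n$. Order continuity of the functionals in $\mathcal{X}_n^\sim$ upgrades $a_n\xrightarrow{o}f$ to $a_n\to f$ in $\sigma(\mathcal{X},\mathcal{X}_n^\sim)$, whence $h_n\to f$ in $\sigma(\mathcal{X},\mathcal{X}_n^\sim)$.

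The chief obstacle is upgrading this weak convergence to $f\in C$, given only that $C$ is order closed---its $\sigma(\mathcal{X},\mathcal{X}_n^\sim)$-closedness is precisely what the main theorem of Section~2 still aims to establish. I expect Lemma~\ref{Linfty-o.c.part} to handle the extreme cases: when $\mathcal{X}=L^\infty$, the classical fact $\|\mathbb{E}[f|\pi_n]-f\|_\infty\to0$ makes $\{h_n\}$ norm-convergent and order closedness implies norm closedness (via a summable-norm subsequence dominated by an element of $\mathcal{X}$); when $\mathcal{X}=L^1$, $\sigma(\mathcal{X},\mathcal{X}_n^\sim)$ coincides with the usual weak topology, so Mazur converts weak convergence into norm convergence of convex combinations in $C$. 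In the generic case, where both $L^\infty\subseteq\mathcal{X}_a$ and $L^\infty\subseteq\mathcal{X}_{uo}^\sim$ hold, the decisive step should use law-invariance together with \cite[Proposition 2]{CGX} once more to pass from the weakly convergent sequence $\{h_n\}\subseteq C$ to a sequence of convex combinations of equimeasurable rearrangements that order converges to $f$, at which point order closedness closes the argument.
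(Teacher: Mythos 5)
Your forward direction and your use of Proposition~\ref{ri-orderdualcnvseq} to produce $a_n=\mathbb{E}[f-f_n|\pi_n]\xrightarrow{o}f$ and $b_n=\mathbb{E}[f_n|\pi_n]\xrightarrow{\sigma(\mathcal{X},\mathcal{X}_n^\sim)}0$ match the paper. But the step you yourself flag as ``the chief obstacle'' is left unresolved, and the routes you sketch for it do not work in the hard case. Applying Mazur to the sequence $\{h_n\}=\{\mathbb{E}[f|\pi_n]\}$ requires the convergence $h_n\to f$ to be genuine weak convergence, i.e.\ convergence in $\sigma(\mathcal{X},\mathcal{X}^*)$; when $\mathcal{X}$ is not order continuous one only has $\mathcal{X}_n^\sim\subsetneq\mathcal{X}^*$, so Mazur does not apply. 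Worse, each $h_n$ lies in $L^\infty\subseteq\mathcal{X}_a$, and $\mathcal{X}_a$ is norm closed, so \emph{no} convex combinations of the $h_n$ can norm-converge to $f$ when $f\notin\mathcal{X}_a$ --- which is exactly the situation in a non-order-continuous r.i.\ space. Your fallback for the ``generic case'' (invoke law-invariance and \cite[Proposition 2]{CGX} again to rearrange $\{h_n\}$ into an order-convergent sequence of convex combinations) names no mechanism: law-invariance has already been spent in establishing $\mathbb{E}[f|\pi]\in C$ and plays no further role, and there is no reason equimeasurable rearrangements would convert $\sigma(\mathcal{X},\mathcal{X}_n^\sim)$-convergence into order convergence.

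The paper's resolution is to apply Mazur \emph{only to the sequence} $\{b_n\}$, not to $\{h_n\}$. Since each $b_n\in L^\infty\subseteq\mathcal{X}_a$ (here Lemma~\ref{Linfty-o.c.part} is used, after disposing of $\mathcal{X}=L^\infty$ separately) and $\mathcal{X}_n^\sim$ is canonically $(\mathcal{X}_a)^*$, the convergence $b_n\to0$ in $\sigma(\mathcal{X},\mathcal{X}_n^\sim)$ \emph{is} weak convergence in the Banach space $\mathcal{X}_a$, and its limit $0$ does lie in $\mathcal{X}_a$. Mazur then yields block convex combinations $h_k=\sum_{n=p_{k-1}+1}^{p_k}a_{kn}b_n$ with $\|h_k\|_{\mathcal{X}}\leq 2^{-k}$, hence $h_k\xrightarrow{o}0$. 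Applying the same coefficients to the $a_n$ preserves order convergence to $f$, so $\sum_{n}a_{kn}\mathbb{E}[f|\pi_n]=\sum_n a_{kn}a_n+h_k\xrightarrow{o}f$, and each of these is a convex combination of elements of $C$. Order closedness then gives $f\in C$. The essential idea you are missing is this decomposition: Mazur is applied to the piece whose weak limit stays inside $\mathcal{X}_a$, while the piece converging to $f$ is already order convergent and needs no upgrading.
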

\begin{proof}
	We only need to show that  $\mathbb{E}[f|\pi]\in C$ for any $\pi\in \Pi$ implies $f\in C$ as the reverse implication is already given in \cite[Proposition 2]{CGX}. Suppose that $f\in\mathcal{X}$ and $\mathbb{E}[f|\pi]\in C$ for any $\pi\in \Pi$. If $\mathcal{X}=L^\infty$, $f$ is an order limit of a sequence of $\{\mathbb{E}[f|\pi]\}$. Since $C$ is order closed, we deduce that $f\in C$. Now suppose that $\mathcal{X}\neq L^\infty$. By Lemma \ref{Linfty-o.c.part}, $L^\infty \subseteq\mathcal{X}_a$. According to Proposition \ref{ri-orderdualcnvseq}, there exist sequences $\{\pi_n\} \subseteq \Pi$ and $\{f_n\} \subseteq \mathcal{X}$ such that $0\leq |f_n|\leq |f|$ for every $n$, $\mathbb{E}[f-f_n|\pi_n]\xrightarrow{o}f$ and $\mathbb{E}[f_n|\pi_n]\xrightarrow{\sigma(\mathcal{X},\mathcal{X}_n^\sim)}0$. Observe that $\mathbb{E}[f_n|\pi_n]\in L^\infty \subseteq\mathcal{X}_a$ for every $n$. Since $\mathcal{X}_n^\sim$ can be canonically identified with $\mathcal{X}_a^*$ (\cite[Corollary 4.2, p. 23]{BS}), $\{\mathbb{E}[f_n|\pi_n]\}$ converges weakly to $0$ in $\mathcal{X}_a$ and hence in $\mathcal{X}$. 
	
	Denote by $\text{co}(E)$ the set of all convex combinations of elements in $E$. Since $\{\mathbb{E}[f_n|\pi_n]\}$ converges weakly to $0$, we obtain that 
	\[
	0\in\overline{\text{co}(\{\mathbb{E}[f_n|\pi_n]:n\geq k\})}^{\sigma(\mathcal{X},\mathcal{X}^*)}=\overline{\text{co}(\{\mathbb{E}[f_n|\pi_n]:n\geq k\})}^{\|\cdot\|_{\mathcal{X}}}
	\]
	for every $k\geq 1$. Then we can find a strictly increasing sequence $\{p_k\}$ and $h_k\in \text{co}(\{\mathbb{E}[f_n|\pi_n]:p_{k-1}< n\leq p_k\})$ $(p_0:=0)$ such that $\|h_k\|_{\mathcal{X}}\leq \frac{1}{2^k}$ for every $k$. It follows that $h_k \xrightarrow{o}0$. Write 
	\[
	h_k=\sum_{n=p_{k-1}+1}^{p_k}a_{kn}\mathbb{E}[f_n|\pi_n]
	\]
	where $0\leq a_{kn}\leq 1$ and $\displaystyle{\sum_{n=p_{k-1}+1}^{p_k}a_{kn}=1}$ for every $k$. Since $\mathbb{E}[f-f_n|\pi_n]\xrightarrow{o}f$,
	\[
	\sum_{n=p_{k-1}+1}^{p_k}a_{kn}\mathbb{E}[f-f_n|\pi_n]\xrightarrow{o}f.
	\]
	Therefore,
	\[
	\sum_{n=p_{k-1}+1}^{p_k}a_{kn}\mathbb{E}[f|\pi_n]=\left(\sum_{n=p_{k-1}+1}^{p_k}a_{kn}\mathbb{E}[f-f_n|\pi_n]\right) + h_k\xrightarrow{o}f.
	\]
	Since $\sum_{n=p_{k-1}+1}^{p_k}a_{kn}\mathbb{E}[f|\pi_n]\in C$ for every $k$ and $C$ is order closed, we conclude that $f\in C$.

\end{proof}

Now, we are already to prove the main result of this section. This result generalizes Corollary 4.6 in \cite{GLMX}. 

\begin{theorem}\label{cvxinvariantequiv}
	The following are equivalent for a convex law-invariant set $C$ in $\mathcal{X}$:
	\begin{enumerate}[$(1)$]
		\item $C$ is order closed.
		\item $C$ is $\sigma(\mathcal{X},\mathcal{X}_n^\sim)$-closed.
		\item $C$ is $\sigma(\mathcal{X},L^\infty)$-closed.
	\end{enumerate}
	If $\mathcal{X}\neq L^1$, they are also equivalent to the following:
	\begin{enumerate}[$(1)$]\setcounter{enumi}{3}
		\item $C$ is $\sigma(\mathcal{X},\mathcal{X}_{uo}^\sim)$-closed.
		
	\end{enumerate}
\end{theorem}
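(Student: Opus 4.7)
The inclusions $\overline{C}^o \subseteq \overline{C}^{\sigma(\mathcal{X},\mathcal{X}_n^\sim)} \subseteq \overline{C}^{\sigma(\mathcal{X},L^\infty)}$ recorded at the start of the section render $(3) \Rightarrow (2) \Rightarrow (1)$ automatic, so the real content is $(1) \Rightarrow (3)$. My plan is to leverage Theorem \ref{invariant-0closed-cvx-prop}, which already repackages order closedness plus law-invariance into a conditional-expectation test, and to use that test to write $C$ as an intersection of preimages under finite-dimensional conditional expectations, each of which is visibly $\sigma(\mathcal{X},L^\infty)$-closed.

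Concretely, for each $\pi \in \Pi$ I would set $C_\pi := C \cap L^\infty(\sigma(\pi))$ and verify two facts. First, $C_\pi$ is norm-closed in the finite-dimensional space $L^\infty(\sigma(\pi))$: if $g_n \in C_\pi$ converges uniformly to $g$, then $\{g_n\}$ is uniformly bounded (say by some $M\rchi_\Omega \in \mathcal{X}$) and converges a.e., so $g_n\xrightarrow{o}g$ in $\mathcal{X}$, and order-closedness of $C$ places $g \in C \cap L^\infty(\sigma(\pi)) = C_\pi$. Second, the conditional expectation $T_\pi : f \mapsto \mathbb{E}[f|\pi]$ from $\mathcal{X}$ into the finite-dimensional $L^\infty(\sigma(\pi))$ is $\sigma(\mathcal{X},L^\infty)$-to-norm continuous, since its finitely many coordinates $\mu(A_i)^{-1}\int f\rchi_{A_i}\,\mathrm{d}\mu$ are pairings of $f$ with elements of $L^\infty$. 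Hence each $T_\pi^{-1}(C_\pi)$ is $\sigma(\mathcal{X},L^\infty)$-closed in $\mathcal{X}$, and Theorem \ref{invariant-0closed-cvx-prop} gives
\[
C \;=\; \bigcap_{\pi \in \Pi} T_\pi^{-1}(C_\pi),
\]
an intersection of $\sigma(\mathcal{X},L^\infty)$-closed sets, hence itself $\sigma(\mathcal{X},L^\infty)$-closed.

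For the additional equivalence with $(4)$ under $\mathcal{X} \neq L^1$, I would slot $\mathcal{X}_{uo}^\sim$ into the chain of dual pairings. Lemma \ref{Linfty-o.c.part}(2), which needs precisely this hypothesis, yields $L^\infty \subseteq \mathcal{X}_{uo}^\sim$, while $\mathcal{X}_{uo}^\sim = (\mathcal{X}_n^\sim)_a \subseteq \mathcal{X}_n^\sim$ holds unconditionally. These inclusions give $\sigma(\mathcal{X},L^\infty) \subseteq \sigma(\mathcal{X},\mathcal{X}_{uo}^\sim) \subseteq \sigma(\mathcal{X},\mathcal{X}_n^\sim)$, hence $(3) \Rightarrow (4) \Rightarrow (2)$, which closes the loop.

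I do not anticipate any serious obstacle beyond the correct deployment of Theorem \ref{invariant-0closed-cvx-prop}: that theorem shoulders all the hard work of translating a test against $L^\infty$ functionals into a family of finite-dimensional conditions, and the present argument is essentially a topological packaging of it.
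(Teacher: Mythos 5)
Your proof is correct and rests on the same engine as the paper's, namely Theorem \ref{invariant-0closed-cvx-prop}: the paper carries out $(1)\Rightarrow(3)$ with a net argument (a net in $C$ converging in $\sigma(\mathcal{X},L^\infty)$ has conditional expectations converging weakly to those of the limit, which then lie in $C$ because an order closed convex set is norm closed and hence weakly closed), whereas you package the same content as $C=\bigcap_{\pi}T_\pi^{-1}(C_\pi)$, an intersection of $\sigma(\mathcal{X},L^\infty)$-closed preimages, which sidesteps the appeal to weak closedness in favour of finite-dimensional closedness of $C_\pi$. The treatment of the inclusions giving $(3)\Rightarrow(2)\Rightarrow(1)$ and of $(4)$ via Lemma \ref{Linfty-o.c.part} is identical in both.
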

\begin{proof}
	As stated at the beginning of the section, $(3)\implies (2)\implies (1)$ always holds. If $\mathcal{X}\neq L^1$,  $L^\infty \subseteq \mathcal{X}_{uo}^\sim \subseteq \mathcal{X}_n^\sim$ by Lemma \ref{Linfty-o.c.part} and hence, $(3)\implies (4)\implies (2)$. Thus, it is enough to show that $(1)\implies(3)$. Suppose that $C$ is order closed. Let $f$ be an element in the $\sigma(\mathcal{X},L^\infty)$-closure of $C$. There is a net $\{f_\alpha\} \subseteq C$ that $\sigma(\mathcal{X},L^\infty)$-converges to $f$. Observe that $\{\mathbb{E}[f_\alpha|\pi]\}$ converges weakly to $\mathbb{E}[f|\pi]$. Indeed, for a partition $\pi=\{A_1,\dotsc,A_n\}\in\Pi$ and $\varphi\in\mathcal{X}^*$, we have
	\[
	\varphi(\mathbb{E}[f_\alpha|\pi])=\int_{\Omega}f_\alpha\sum_{i=1}^n\frac{\varphi(\rchi_{A_i})}{\mu(A_i)}\rchi_{A_i}\mathrm{d}\mu\to\int_{\Omega}f\sum_{i=1}^n\frac{\varphi(\rchi_{A_i})}{\mu(A_i)}\rchi_{A_i}\mathrm{d}\mu=\varphi(\mathbb{E}[f|\pi]).
	\]
	Since $C$ is order closed, it is weakly closed. Together with the fact that each $\mathbb{E}[f_\alpha|\pi]$ lies in $C$ (Theorem \ref{invariant-0closed-cvx-prop}), we obtain that $\mathbb{E}[f|\pi]\in C$ for all $\pi\in \Pi$ and hence, $f\in C$. This proves that $C$ is $\sigma(\mathcal{X},L^\infty)$-closed.
\end{proof}

Lemma \ref{Linfty-o.c.part} can also be used to characterize the equivalence between order closedness and norm closedness of law-invariant convex sets.

\begin{theorem}\label{cvxinveqvnorm}
	The following are equivalent:
	\begin{enumerate}[$(1)$]
		\item Every norm closed law-invariant convex set in $\mathcal{X}$ is order closed. 
		\item Either $\mathcal{X}=L^\infty$ or $X$ is order continuous.
	\end{enumerate}
\end{theorem}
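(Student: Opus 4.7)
The plan is to handle $(2)\Rightarrow(1)$ directly, and to establish $(1)\Rightarrow(2)$ by exhibiting a counterexample in the contrapositive.

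For $(2)\Rightarrow(1)$, if $\mathcal{X}$ is order continuous then order convergence implies norm convergence by definition, so every norm closed set is automatically order closed, with no law-invariance or convexity needed. The substantive case is $\mathcal{X}=L^\infty$. Given a norm closed law-invariant convex set $C$ in $L^\infty$ and $\{f_n\}\subseteq C$ with $f_n\xrightarrow{o}f$, order convergence in $L^\infty$ forces $\{f_n\}$ to be uniformly bounded and to converge a.e., so dominated convergence gives $f_n\to f$ in $L^1$. For each fixed $\pi\in\Pi$, the operator $\mathbb{E}[\cdot|\pi]$ has finite-dimensional range, so $L^1$-convergence upgrades to $L^\infty$-convergence, yielding $\mathbb{E}[f_n|\pi]\to \mathbb{E}[f|\pi]$ in $L^\infty$. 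Since $\mathbb{E}[f_n|\pi]\in C$ by \cite[Proposition~2]{CGX}, norm closedness of $C$ gives $\mathbb{E}[f|\pi]\in C$ for every $\pi$. Finally, $\mathbb{E}[f|\pi]\xrightarrow{\|\cdot\|_\infty}f$ as $\pi$ refines through $\Pi$ (cited at the beginning of this section), so a second appeal to norm closedness delivers $f\in C$.

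For the contrapositive of $(1)\Rightarrow(2)$, assume $\mathcal{X}\neq L^\infty$ and $\mathcal{X}$ is not order continuous. The candidate counterexample is $B:=\{f\in\mathcal{X}_a:\|f\|_{\mathcal{X}}\le 1\}$, the closed unit ball of the order continuous part. Convexity is immediate, and $B$ is norm closed in $\mathcal{X}$ because $\mathcal{X}_a$ is a closed ideal in $\mathcal{X}$. To exhibit an order-convergent sequence in $B$ whose limit escapes $B$, I pick $g\in\mathcal{X}_+\setminus\mathcal{X}_a$ with $\|g\|_{\mathcal{X}}\le 1$ (possible since $\mathcal{X}_a$ is a proper ideal of $\mathcal{X}$); Lemma~\ref{Linfty-o.c.part}(1) applies here since $\mathcal{X}\neq L^\infty$, giving $L^\infty\subseteq\mathcal{X}_a$. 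Then $g_n:=g\wedge n\in L^\infty\subseteq\mathcal{X}_a$ with $\|g_n\|_{\mathcal{X}}\le 1$, so $g_n\in B$; and $0\le g_n\uparrow g\in\mathcal{X}$ gives $g_n\xrightarrow{o}g$, while $g\notin\mathcal{X}_a$ forces $g\notin B$.

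The one point requiring real care is that $B$ is law-invariant, which reduces to law-invariance of $\mathcal{X}_a$ itself. I plan to verify this via the characterization $f\in\mathcal{X}_a$ iff $\sup_{\mu(A)\le t}\|f\rchi_A\|_{\mathcal{X}}\to 0$ as $t\to 0^+$ (equivalent to the sequential definition through a routine subsequence argument in a non-atomic space), together with the observation that this supremum depends only on the distribution of $|f|$ — the extremal choice of $A$ is a level set of $|f|$, for which $f\rchi_A$ has decreasing rearrangement $f^*\rchi_{[0,t)}$, and both $f^*$ and the r.i.\ norm are distribution-determined.
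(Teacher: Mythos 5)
Your proposal is correct, but it diverges from the paper's proof at several points, so a comparison is worthwhile. For $(1)\Rightarrow(2)$ the paper uses the half-space $C=\{f\in\mathcal{X}_a:\int_\Omega f\,\mathrm{d}\mu\ge 0\}$ rather than your unit ball $B$ of $\mathcal{X}_a$; both work, and both reduce to the same two facts (law-invariance of $\mathcal{X}_a$, plus truncations $g\wedge n$ of a positive $g\notin\mathcal{X}_a$ order-converging to $g$). Where you differ substantively is in proving law-invariance of $\mathcal{X}_a$: the paper truncates at level $n$, notes that $f\rchi_{\{|f|>n\}}$ and $g\rchi_{\{|g|>n\}}$ are equidistributed so that $g\rchi_{\{|g|\le n\}}\to g$ in norm, and invokes norm closedness of $\mathcal{X}_a$ together with $L^\infty\subseteq\mathcal{X}_a$ — an entirely elementary argument. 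Your route via $\sup_{\mu(A)\le t}\|f\rchi_A\|_{\mathcal{X}}$ is also valid, but the step ``the extremal $A$ is a level set'' silently uses the fact that $(f\rchi_A)^*\le f^*\rchi_{[0,t)}$ forces $\|f\rchi_A\|_{\mathcal{X}}\le\|f\rchi_{A_0}\|_{\mathcal{X}}$, i.e.\ that the r.i.\ norm is monotone under pointwise domination of decreasing rearrangements (essentially the Luxemburg representation, \cite[Ch.~2]{BS}); you should cite this explicitly. For $(2)\Rightarrow(1)$ the paper simply quotes \cite[Remark 4.4]{JST} for $L^\infty$ and uses Mazur's theorem (norm closed convex $=$ weakly closed $=\sigma(\mathcal{X},\mathcal{X}_n^\sim)$-closed) in the order continuous case, whereas you reprove the $L^\infty$ case directly from the conditional-expectation machinery and observe that order continuity alone upgrades norm closedness to order closedness without convexity — a genuinely more elementary and slightly stronger argument in that branch. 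One small caution in your $L^\infty$ argument: you need the invariance $f\in C\Rightarrow\mathbb{E}[f|\pi]\in C$ for merely \emph{norm closed} law-invariant convex $C$; the paper cites \cite[Proposition 2]{CGX} in the context of order closed sets, so you should either verify that the norm-closed version is what is actually proved there, or cite \cite{JST,Sv} directly, where this is established for $L^\infty$.
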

\begin{proof}
	$(1)\implies(2)$. Suppose that $\mathcal{X}\neq L^\infty$ and $\mathcal{X}$ is not order continuous. We will show that there exists a norm closed law-invariant convex set in $X$ which is not order closed.  First, observe that $\mathcal{X}_a$ is law-invariant. To see this, let $g\in \mathcal{X}$ have the same distribution as some $f\in \mathcal{X}_a$. For every $n$, both $f\rchi_{\{|f|\leq n\}}$ and $g\rchi_{\{|g|\leq n\}}$ are in $L^\infty$ and they have the same distribution. It follows that $$\lim_{n\to\infty}\left\|g\rchi_{\{|g|\leq n\}}\right\|_{\mathcal{X}}=\lim_{n\to\infty}\left\|f\rchi_{\{|f|\leq n\}}\right\|_{\mathcal{X}}=\left\|f\right\|_{\mathcal{X}}=\left\|g\right\|_{\mathcal{X}}$$
	by order continuity of $\mathcal{X}_a$. Since $L^\infty\subseteq \mathcal{X}_a$ by Lemma \ref{Linfty-o.c.part} and $\mathcal{X}_a$ is norm closed, we deduce that $g\in \mathcal{X}_a$. Hence, $\mathcal{X}_a$ is law-invariant. 
	
	Now, set 
	\[
	C=\left\{f\in\mathcal{X}_a: \int_{\Omega}f\mathrm{d}\mu\geq0\right\}.
	\]
	Clearly, $af+bg\in C$ for every $f,g\in C$ and $a,b\geq0$. In particular, $C$ is convex. Since $\mathcal{X}_a$ is law-invariant, $C$ is also law-invariant. Now let $f$ be an element in the norm closure of $C$. Then there is a sequence $\{f_n\} \subseteq C$ which norm converges to $f$. Since $\rchi_{\Omega}\in\mathcal{X}^*$, 
	\[
	\left|\int_{\Omega} f_n\mathrm{d}\mu-\int_{\Omega} f\mathrm{d}\mu\right|\leq \left\|f_n-f\right\|_{\mathcal{X}}\left\|\rchi_{\Omega}\right\|_{\mathcal{X}^*}\to0.
	\]
	Hence, $\int_{\Omega}f\mathrm{d}\mu\geq0$. Since $\mathcal{X}_a$ is norm closed, we see that $f\in \mathcal{X}_a$. Therefore, $f\in C$. This shows that $C$ is norm closed.
	
	It remains to show that $C$ is not order closed. Suppose that it is order closed. Let $g$ be a nonnegative element in $\mathcal{X}\backslash\mathcal{X}_a$ and $\lambda\geq\int_{\Omega}g\mathrm{d}\mu$. Set
	\[
	f_n=\lambda\rchi_{\Omega}-g\rchi_{\{g\leq n\}}
	\]  
	for every $n\in\mathbb{N}$ and $f=\lambda\rchi_{\Omega}-g$. Clearly, $f_n\xrightarrow{o}f$. Since $\{f_n\} \subseteq L^\infty \subseteq\mathcal{X}_a$ and $\int_{\Omega}f_n\mathrm{d}\mu\geq \lambda-\int_{\Omega}g\mathrm{d}\mu\geq0$, we obtain that $\{f_n\} \subseteq C$ and hence, $f\in C$ by the order closedness of $C$. It follows that $f\in\mathcal{X}_a$ and hence, $g=\lambda\rchi_{\Omega}-f\in\mathcal{X}_a$, a contradiction. 
	
	$(2)\implies(1)$. If $\mathcal{X}=L^\infty$, then $(1)$ holds by \cite[Remark 4.4]{JST} and Theorem \ref{cvxinvariantequiv}. If $\mathcal{X}$ is order continuous, then $\sigma(\mathcal{X},\mathcal{X}_n^\sim)$ is just the weak topology on $\mathcal{X}$. It follows that any norm closed convex in $\mathcal{X}$ is $\sigma(\mathcal{X},\mathcal{X}_n^\sim)$-closed and hence, order closed.
\end{proof}

\section{Quasiconvex law-invariant functionals with the Fatou property}
We say that a functional $\rho:\mathcal{X}\to(-\infty,\infty]$ is proper if it is not identically $\infty$. It is convex (resp., quasiconvex) if $\rho(\lambda f+(1-\lambda) g)\leq \lambda\rho(f)+(1-\lambda)\rho(g)$ for any $f,g\in\mathcal{X}$ and $\lambda\in[0,1]$  (resp., the sublevel set $\left\{\rho\leq\lambda\right\}:=\left\{f\in\mathcal{X}:\rho(f)\leq\lambda\right\}$ is convex for every $\lambda\in\mathbb{R}$). It is law-invariant if $\rho(f)=\rho(g)$ whenever $f$ and $g$ have the same distribution. Note that $\rho$ is law-invariant if and only if each sublevel set $\{\rho\leq \lambda\}$ is law-invariant. If $\tau$ is a topology on $\mathcal{X}$, we say that $\rho$ is $\tau$-lower semicontinuous if the sublevel set $\left\{\rho\leq\lambda\right\}$ is $\tau$-closed for every $\lambda\in\mathbb{R}$.

Recall that a functional $\rho$ is said to have the Fatou property if $\rho(f)\leq\liminf_n\rho(f_n)$ whenever $f_n\xrightarrow{o}f$. Since the Fatou property is equivalent to order closedness of $\left\{\rho\leq\lambda\right\}$ for every $\lambda\in\mathbb{R}$, Theorem \ref{cvxinvariantequiv} gives us the following result which generalizes Theorem 1.1 in \cite{GLMX} and Proposition 11 in \cite{CGX}. This shows, in particular, that any proper convex law-invariant functional on $\mathcal{X}$ with the Fatou property admits the Fenchel-Moreau dual representation via $\mathcal{X}_n^\sim$.

\begin{theorem}\label{fatou-lsc-equiv}
	Let $\rho:\mathcal{X}\to(-\infty,\infty]$ be a proper quasiconvex law-invariant functional. Then the following are equivalent:
	\begin{enumerate}[$(1)$]
		\item $\rho$ has the Fatou property.
		\item $\rho$ is $\sigma(\mathcal{X},\mathcal{X}_n^\sim)$-lower semicontinuous.
		\item $\rho$ is $\sigma(\mathcal{X},L^\infty)$-lower semicontinuous.
	\end{enumerate}
	If $\mathcal{X}\neq L^1$, they are also equivalent to the following:
	\begin{enumerate}[$(1)$]\setcounter{enumi}{3}
		\item $\rho$ is $\sigma(\mathcal{X},\mathcal{X}_{uo}^\sim)$-lower semicontinuous.
	\end{enumerate}
\end{theorem}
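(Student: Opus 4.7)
The plan is to reduce the theorem directly to Theorem \ref{cvxinvariantequiv} applied to each sublevel set of $\rho$. The key observation is that every one of the four conditions in the statement is characterized by a corresponding closedness property of the sublevel sets $\{\rho\leq\lambda\}$: the Fatou property is equivalent to order closedness of $\{\rho\leq\lambda\}$ for every $\lambda\in\mathbb{R}$ (as recalled in the paragraph immediately preceding the theorem), and for any topology $\tau$ on $\mathcal{X}$, $\tau$-lower semicontinuity of $\rho$ is by definition the $\tau$-closedness of each sublevel set.

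Next, I would verify that for every $\lambda\in\mathbb{R}$ the sublevel set $\{\rho\leq\lambda\}$ is both convex and law-invariant. Convexity is the very definition of quasiconvexity. Law-invariance is the remark already made in the excerpt that $\rho$ is law-invariant if and only if each sublevel set is law-invariant: indeed, if $g\in\mathcal{X}$ has the same distribution as some $f\in\{\rho\leq\lambda\}$, then $\rho(g)=\rho(f)\leq\lambda$, so $g\in\{\rho\leq\lambda\}$.

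With these identifications in place, I would apply Theorem \ref{cvxinvariantequiv} to the convex law-invariant set $C_\lambda:=\{\rho\leq\lambda\}$ for each $\lambda\in\mathbb{R}$. That theorem yields the equivalence of order closedness, $\sigma(\mathcal{X},\mathcal{X}_n^\sim)$-closedness and $\sigma(\mathcal{X},L^\infty)$-closedness of $C_\lambda$, together with $\sigma(\mathcal{X},\mathcal{X}_{uo}^\sim)$-closedness when $\mathcal{X}\neq L^1$. Since these equivalences hold simultaneously for every $\lambda$, translating back through the sublevel characterisation gives the desired equivalences among $(1)$, $(2)$, $(3)$, and $(4)$.

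No serious obstacle is expected; the entire content of the present theorem is packaged inside Theorem \ref{cvxinvariantequiv}, and the only real task is the bookkeeping of recasting the semicontinuity and Fatou conditions as sublevel-set closedness. The one minor point to be careful about is that the reduction is genuinely \emph{level-by-level}: we do not need any uniformity in $\lambda$, and the Fatou/semicontinuity equivalence for $\rho$ follows by quantifying the sublevel closedness equivalences over all $\lambda\in\mathbb{R}$.
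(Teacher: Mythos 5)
Your proposal is correct and matches the paper's own argument exactly: the paper derives this theorem immediately from Theorem \ref{cvxinvariantequiv} by observing that each of the four conditions is precisely the corresponding closedness of every sublevel set $\{\rho\leq\lambda\}$, which is convex by quasiconvexity and law-invariant by law-invariance of $\rho$. The level-by-level reduction you describe is all the paper does.
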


Note that any functional with the Fatou property is automatically norm lower semicontinuous and the converse is also true when $\mathcal{X}$ is order continuous.  In \cite{JST}, Jouini et al. showed that for any proper convex law-invariant functional $\rho:L^\infty\to(-\infty,\infty]$, norm lower semicontinuity of $\rho$ implies the Fatou property. However, this  may fail in Orlicz spaces. In fact, Gao et al. in \cite{GLMX} proved that the same implication remains true for every proper quasiconvex law-invariant functional on an Orlicz space $L^\phi$ if and only if $\phi$ satisfies the $\Delta_2$ condition, or equivalently, $L^\phi$ is order continuous. We extend this result to general r.i. spaces.

\begin{theorem}
	Let $\mathcal{X}\neq L^\infty$ be not order continuous. There exists a proper norm lower semicontinuous convex law-invariant functional on $\mathcal{X}$ which fails the Fatou property.
\end{theorem}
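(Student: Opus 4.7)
The plan is to take the convex indicator of the set $C$ constructed in the proof of Theorem \ref{cvxinveqvnorm}. Specifically, with
\[
C = \left\{f \in \mathcal{X}_a : \int_\Omega f \, d\mu \geq 0\right\},
\]
I would define $\rho : \mathcal{X} \to (-\infty,\infty]$ by $\rho(f) = 0$ if $f \in C$ and $\rho(f) = +\infty$ otherwise. Each sublevel set $\{\rho \leq \lambda\}$ is either empty (when $\lambda < 0$) or equal to $C$ (when $\lambda \geq 0$), so every topological and algebraic property of $\rho$ is inherited directly from $C$.

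The verification is then just a matter of quoting the proof of Theorem \ref{cvxinveqvnorm}. Properness of $\rho$ is immediate since $0 \in C$. Convexity and law-invariance of $\rho$ reduce to convexity and law-invariance of $C$, both of which were established there. Norm lower semicontinuity of $\rho$ follows from the fact, also shown in the same proof, that $C$ is norm closed. Finally, recall that the Fatou property is equivalent to order closedness of every sublevel set; since the final part of that proof exhibits, under exactly the hypotheses $\mathcal{X} \neq L^\infty$ and $\mathcal{X}$ not order continuous, a sequence in $C$ that order converges to a point outside $C$, the functional $\rho$ fails the Fatou property.

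There is essentially no obstacle: the genuine technical work — producing a convex, law-invariant, norm-closed subset of $\mathcal{X}$ that fails to be order closed under the given hypotheses — has already been carried out. The content of the present theorem is the observation that the convex indicator of such a set is automatically a proper, convex, law-invariant, norm lower semicontinuous functional that violates the Fatou property.
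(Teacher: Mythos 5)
Your proposal is correct and follows the paper's approach: both reduce the theorem entirely to the set $C$ of Theorem \ref{cvxinveqvnorm}, whose convexity, law-invariance, norm closedness, and failure of order closedness were already established there, together with the standard equivalence between the Fatou property and order closedness of sublevel sets. The only difference is in the packaging: the paper takes $\rho(f)=\inf\{m\in\mathbb{R}:f+m\rchi_{\Omega}\in C\}$ (which equals $-\int_{\Omega}f\,\mathrm{d}\mu$ on $\mathcal{X}_a$ and $+\infty$ elsewhere), whereas you take the convex indicator of $C$; in either case every nonempty sublevel set is $C$ or a translate of it, so the verification is the same and your version is, if anything, slightly shorter.
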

\begin{proof} 
	Let $C$ be the set constructed in Theorem \ref{cvxinveqvnorm}. Define a functional $\rho:\mathcal{X}\to[-\infty,\infty]$ by 
	$\rho(f)=\inf\{m\in\mathbb{R}:f+m\rchi_{\Omega}\in C\}$ if $f+m\rchi_\Omega\in C$ for some $m\in\mathbb{R}$ and $\rho(f)=\infty$ elsewhere. Clearly, $\rho$ is proper and does not attain the value $-\infty$. Since $C$ is convex and law-invariant, $\rho$ is convex and law-invariant. Observe that $C\subseteq \{\rho\leq0\}$. If $\rho(f)<0$, then there exists $m<0$ such that $f+m\rchi_\Omega\in C$. Since $-m\rchi_{\Omega}\in C$, we obtain that $f=f+m\rchi_\Omega-m\rchi_\Omega\in C$. If $\rho(f)=0$, then $f$ lies in the norm closure of $C$. Since $C$ is norm closed, it follows that $f\in C$. Thus, we deduce that $C= \{\rho\leq0\}$. As a consequence, each sublevel set $\{\rho\leq\lambda\}=\{f\in\mathcal{X}:\rho(f+\lambda\rchi_{\Omega})\leq0\}$ is norm closed and hence, $\rho$ is norm lower semicontinuous. However, since $\{\rho\leq0\}=C$ is not order closed, $\rho$ fails the Fatou property. 
\end{proof}

We end this section by providing the relation between the Fatou property and another Fatou-type property introduced by Gao and Xantos \cite{GX} which is known as the strong Fatou property. A functional $\rho$ on $\mathcal{X}$ is said to have the strong Fatou property if $\rho(f)\leq\liminf_n\rho(f_n)$ whenever $f_n\xrightarrow{a.e}f$ and $\{f_n\}$ is norm bounded. Clearly, any functional with the strong Fatou property has the Fatou property. 	When $\mathcal{X}\neq L^1$ is an Orlicz space, the Fatou property and the strong Fatou property are equivalent for any proper quasiconvex law-invariant functional on $\mathcal{X}$ (\cite{GLMX}). When $\mathcal{X}=L^1$,  the  convex law-invariant functional $\rho:\mathcal{X}\to(-\infty,\infty]$,  defined by
\[
\rho(f)=\int_{\Omega}f\mathrm{d}\mu,\ f\in L^1,
\]
has the Fatou property but fails not the strong Fatou property (see \cite[Example 7]{CGX}). In case $\mathcal{X}\neq L^1$ is an order continuous r.i. space, Chen et al. \cite{CGX} showed that the same equivalence remains true. The following theorem shows that in fact the equivalence holds for any r.i. space $\mathcal{X}\neq L^1$.

\begin{theorem}
	Let $\mathcal{X}\neq L^1$. For any proper quasiconvex law-invariant functional $\rho:\mathcal{X}\to(-\infty,\infty]$, the Fatou property and the strong Fatou property of $\rho$ are equivalent.
\end{theorem}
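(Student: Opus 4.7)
The plan is to reduce the strong Fatou property to $\sigma(\mathcal{X},\mathcal{X}_{uo}^\sim)$-lower semicontinuity, then invoke Theorem \ref{fatou-lsc-equiv} (clause (4), which is available exactly because $\mathcal{X}\neq L^1$) to identify that with the Fatou property.

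The easy direction is immediate: if $f_n\xrightarrow{o}f$, there is $h\in\mathcal{X}$ with $|f_n|\le h$, so $\|f_n\|_{\mathcal{X}}\le\|h\|_{\mathcal{X}}$ and $\{f_n\}$ is norm bounded; thus the strong Fatou property trivially implies the Fatou property.

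For the converse, suppose $\rho$ has the Fatou property, and let $f_n\xrightarrow{a.e.}f$ with $\{f_n\}$ norm bounded. The key observation is that these hypotheses already force $f_n\xrightarrow{\sigma(\mathcal{X},\mathcal{X}_{uo}^\sim)}f$. Indeed, for any $\varphi\in\mathcal{X}_{uo}^\sim$, the sequence $g_n:=f_n-f$ satisfies $g_n\xrightarrow{a.e.}0$ and $\|g_n\|_{\mathcal{X}}\le\|f_n\|_{\mathcal{X}}+\|f\|_{\mathcal{X}}$, so $\{g_n\}$ is norm bounded; by the very definition of $\mathcal{X}_{uo}^\sim$ this gives $\varphi(g_n)\to 0$.

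With this in hand, the liminf inequality is standard. Writing $\ell=\liminf_n\rho(f_n)$, we may assume $\ell<\infty$; for any $\lambda>\ell$, pass to a subsequence $\{f_{n_k}\}$ lying in the sublevel set $C_\lambda:=\{\rho\le\lambda\}$. Since $\rho$ is quasiconvex and law-invariant with the Fatou property, $C_\lambda$ is convex, law-invariant, and order closed, hence by Theorem \ref{fatou-lsc-equiv}(4) also $\sigma(\mathcal{X},\mathcal{X}_{uo}^\sim)$-closed. Since $f_{n_k}\xrightarrow{\sigma(\mathcal{X},\mathcal{X}_{uo}^\sim)}f$, we conclude $f\in C_\lambda$, so $\rho(f)\le\lambda$; letting $\lambda\downarrow\ell$ finishes the argument. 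The only substantive ingredient beyond Theorem \ref{fatou-lsc-equiv} is the one-line verification that a norm-bounded a.e.\ null sequence is $\sigma(\mathcal{X},\mathcal{X}_{uo}^\sim)$-null, so I do not anticipate a serious obstacle; the hypothesis $\mathcal{X}\neq L^1$ is used precisely to guarantee via Lemma \ref{Linfty-o.c.part} that $\mathcal{X}_{uo}^\sim$ separates points strongly enough for clause (4) of Theorem \ref{fatou-lsc-equiv} to apply.
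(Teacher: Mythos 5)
Your proposal is correct and follows essentially the same route as the paper: both reduce the strong Fatou property to $\sigma(\mathcal{X},\mathcal{X}_{uo}^\sim)$-lower semicontinuity via Theorem \ref{fatou-lsc-equiv}(4), using the definitional fact that a norm-bounded a.e.\ convergent sequence is $\sigma(\mathcal{X},\mathcal{X}_{uo}^\sim)$-convergent. You merely spell out the sublevel-set details that the paper leaves implicit.
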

\begin{proof}
	It is enough to show that any proper quasiconvex law-invariant functional  $\rho:\mathcal{X}\to(-\infty,\infty]$ with the Fatou property has the strong Fatou property. Let $\{f_n\}$ be a norm bounded sequence in $\mathcal{X}$ that converges a.e. to $f\in \mathcal{X}$. Then  $f_{n}\xrightarrow{\sigma(\mathcal{X},\mathcal{X}_{uo}^\sim)}f$. Since $\rho$ is $\sigma(\mathcal{X},\mathcal{X}_{uo}^\sim)$-lower semicontinuous (Theorem \ref{fatou-lsc-equiv}), it follows that $\rho(f)\leq\liminf_n\rho(f_{n})$. Thus, $\rho$ has the strong Fatou property.
\end{proof}

\end{document}